\numberwithin{equation}{section}
\numberwithin{figure}{section}
\theoremstyle{plain}
\newtheorem{lem}{\protect\lemmaname}[section]
  \newenvironment{proof}[1][\proofname]{\par
    \normalfont\topsep6\p@\@plus6\p@\relax
    \trivlist
    \itemindent\parindent
    \item[\hskip\labelsep
          \scshape
      #1]\ignorespaces
  }{%
    \endtrivlist\@endpefalse
  }
  \providecommand{\proofname}{Proof}
\theoremstyle{plain}
\newtheorem{cor}{\protect\corollaryname}[section]
\theoremstyle{plain}
\newtheorem{thm}{\protect\theoremname}[section]
\date{}
\providecommand{\corollaryname}{Corollary}
\providecommand{\lemmaname}{Lemma}
\providecommand{\theoremname}{Theorem}
\begin{document}
\title{A Tale of Two Trees: New Analysis for AVL Tree and Binary Heap}
\author{Russel L. Villacarlos\thanks{Department of Information Technology, Cavite State University. Email:
\texttt{rlvillacarlos@cvsu.edu.ph}. Research conducted at University
of the Philippines - Los Baños with support from the Accelerated Science
and Technology Human Resource Development Program of the Department
of Science and Technology.}\and Jaime M. Samaniego\thanks{Institute of Computer Science, University of the Philippines - Los
Baños. Email: \texttt{jmsamaniego2@up.edu.ph}.}\and Arian J. Jacildo\thanks{Institute of Computer Science, University of the Philippines - Los
Baños. Email: \texttt{ajjacildo@up.edu.ph}.}\and Maria Art Antonette D. Clariño\thanks{Institute of Computer Science, University of the Philippines - Los
Baños. Email: \texttt{mdclarino@up.edu.ph}.}}
\maketitle
\begin{abstract}
In this paper, we provide new insights and analysis for the two elementary
tree-based data structures -- the AVL tree and binary heap. We presented
two simple properties that gives a more direct way of relating the
size of an AVL tree and the Fibonacci recurrence to establish the
AVL tree’s logarithmic height. We then give a potential function-based
analysis of the bottom-up heap construction to get a simpler and tight
bound for its worst-case running-time.
\end{abstract}

\section{Introduction}

The AVL tree \cite{AdelsonVelskii1962} and binary (max-)heap \cite{Williams1964,Cormen2009}
are arguably the most elementary tree data structures in the literature.
The AVL tree is the first binary search tree data structure with guaranteed
logarithmic height. Proving its logarithmic height relies on bounding
$N(h)$, the minimum number of nodes needed to construct an AVL tree
with height $h$. A tree of minimum size must have subtrees of different
height and of minimum size thus, $N(h)=N(h-1)+N(h-2)+1$.

The structure of the recurrence suggests that $N(h)$ is related to
the \textit{Fibonacci recurrence}. Indeed, it is provable via induction
that $N(h)=F(h+2)-1$, where $F(i)$ is the $i^{th}$ Fibonacci number.
Since it is known that $F(i)\ge\phi^{(i-2)}$ , where $\phi=(1+\surd5)/2$
is the \textit{golden ratio}, it follows that $h=O(\log_{\phi}n)$
with $n$ being the number of elements in the tree. 

There has been renewed interest in AVL tree with the development of
new AVL variants -- \textit{rank-balanced tree} \cite{Haeupler2009}
and \textit{ravl tree} \cite{Sen2010}, and the analysis of AVL tree
performance with respect to the number of rotations \cite{Amani2016}.
Very recently, an open problem posed in \cite{Grenet2020} asks for
an alternative analysis of the height of AVL tree using a potential
function. In this paper, we partially address this problem by presenting
an alternative analysis. Although not based on a potential function
argument, our analysis offers new insights on properties of AVL trees.

The binary heap is a simple data structure primarily used for the
implementation of the heapsort algorithm \cite{Williams1964} and
priority queue \cite{Cormen2009}. An important algorithm for binary
heap, introduced by Floyd \cite{Floyd1964}, efficiently constructs
a heap from an $n$ element array. The algorithm, which we refer to
as \textit{build-heap}, treats the input array as an ordinary binary
tree then applies, bottom-up, the \textit{sift-down} operation to
each subtree. This sift-down operation transforms a subtree into a
heap by moving down the root element, via exchanges with a child,
until it is in its proper position in the subtree. 

The running-time of build-heap can be obtained by summing-up the time
taken by the individual sift-down operations. A sift-down, in the
worst-case, takes $O(h)$ time, where $h$ is the height of the subtree.
This is since the root element can go down the very bottom of the
subtree. There are $\lceil n/2^{(h+1)}\rceil$ subtrees of height
$h$ and the maximum height of a subtree is $\log_{2}n$. Therefore,
the running time can be described by the sum ${\displaystyle \sum_{h=0}^{\log_{2}n}}h\cdot\lceil n/2^{(h+1)}\rceil$,
which is at most $2n$. 

The analysis of build-heap presented above is a form of \textit{aggregation}
commonly used in amortized analysis \cite{Cormen2009}. Apart from
aggregation, another common technique used is the \textit{potential
method}. The recent result in \cite{Grenet2020} used this potential
method to obtain a simple analysis of the \textit{Euclidean} algorithm.
We take a similar approach in this paper for a simpler analysis of
the running-time of build-heap. 

\subsection{Our Contributions}

In this paper, we provide alternative analysis for the height of AVL
tree and the running time of the build-heap algorithm. In our analysis
of AVL tree we present two simple properties of AVL tree and used
them to directly prove that $N(h)=F(h+2)-1$. The first property shows
that an AVL tree with $N(h)$ elements can be constructed from a tree
with $N(h-1)$ elements by only adding leaves. This property implies
that if we construct an AVL Tree of height $h$ from an initially
empty tree, then the nodes of the final tree can be divided into groups
based on the period they were added as leaves. We note that this property
is a more explicit formulation of the result in \cite{Amani2016},
that proves that an $n$-node AVL tree can be constructed using $n$
inserts. 

The second property shows that if $N_{L}(h)$ is the number of leaves
of the minimum sized AVL tree of height $h$, then $N_{L}(h)=F(h)$.
This property gives a more direct connection between the AVL tree
and the Fibonacci numbers. If we then consider the grouping of nodes
earlier, $N(h)$ is the sum of the first $h$ Fibonacci numbers. Since
the sum of the first $i$ Fibonacci number is known to be equal to
$F(i+2)-1$, the bound on $N(h)$ follows.

For our analysis of build-heap, the key idea is the \textit{heap-merging}
interpretation of the algorithm. First, we treat the array as a forest
containing $n$ heaps. We then view the sift-down operation as a merging
operation to build a larger heap from smaller heaps in the forest.
Finally, the build-heap then becomes a sequence of merge that produces
a single heap. This new interpretation allows us to use a very simple
potential function in terms of the levels of heaps in the forest.
We show that the time taken by a sift-down operation is proportional
to potential loss in the process. 

\section{Analysis of AVL Tree}

Let $T_{h}$ be the AVL tree of height $h$ with minimum number of
nodes. $T_{h}$ can be viewed recursively as a tree containing a root
$r$ with two, possibly empty, subtrees $T_{h-1}$ and $T_{h-2}$. 

Let $N(h)$ be the number of nodes and $N_{L}(h)$ be the number of
leaves of $T_{h}$, respectively. We now establish the relation between
the sizes of $T_{h}$ and $T_{h-1}$.
\begin{lem}
For $h\ge1$, $N(h)=N(h-1)+N_{L}(h)$.
\end{lem}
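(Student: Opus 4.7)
The plan is to prove the lemma by strong induction on $h$, exploiting the recursive decomposition $T_h = r + T_{h-1} + T_{h-2}$ stated at the start of the section (one of the child subtrees may be empty in very small cases). The crucial observation is that for every $h \ge 2$ the root $r$ has at least one nonempty child and is therefore not a leaf, so the leaves of $T_h$ are precisely the disjoint union of the leaves of $T_{h-1}$ and those of $T_{h-2}$. This yields the auxiliary recurrence $N_L(h) = N_L(h-1) + N_L(h-2)$, which together with the size recurrence $N(h) = 1 + N(h-1) + N(h-2)$ will drive the whole argument.

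The base case is $h = 1$: here $T_1$ is a single node that is simultaneously the root and the unique leaf, and $T_0$ is empty, so $N(1) = 1 = 0 + 1 = N(0) + N_L(1)$. For the inductive step with $h \ge 2$, I would apply the inductive hypothesis at height $h-1$ to rewrite $N(h-2) = N(h-1) - N_L(h-1)$, substitute into the size recurrence, and combine with the leaf recurrence above; what remains to be shown then collapses to the AVL size recurrence applied at height $h-1$, which is already known, closing the induction. An equivalent and more pictorial formulation of the same step is that stripping all leaves from $T_h$ produces a tree isomorphic to $T_{h-1}$: by induction, stripping the leaves of $T_{h-1}$ and of $T_{h-2}$ yields $T_{h-2}$ and $T_{h-3}$ respectively, and reattaching these to $r$ reconstructs $r + T_{h-2} + T_{h-3}$, which is exactly $T_{h-1}$; this reading has the added bonus of matching the first structural property advertised in the introduction.

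I expect the main obstacle to be nothing deep but rather the careful bookkeeping of small cases: the decomposition $T_h = r + T_{h-1} + T_{h-2}$ and the leaf recurrence only apply cleanly for $h \ge 2$, so the base $h = 1$ has to be verified separately, and one must be explicit at $h = 2$ that although $T_{h-2}$ is empty the root still has a nonempty child and hence is not itself counted as a leaf. Once these boundary issues are pinned down, the identity $N(h) = N(h-1) + N_L(h)$ follows from a short chain of substitutions into the two recurrences.
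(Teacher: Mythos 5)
Your argument is correct, but it reaches the identity by a genuinely different route than the paper. The paper's proof is constructive and bijective: it builds $T_h$ from $T_{h-1}$ by replacing every leaf with a small subtree ($T_0 \to T_1'$, leaf of $T_1 \to T_0' + T_1'$), which identifies the nodes of $T_{h-1}$ with the internal nodes of $T_h$, so the leaves of $T_h$ are exactly the surplus. You instead run a strong induction on the two recurrences $N(h)=1+N(h-1)+N(h-2)$ and $N_L(h)=N_L(h-1)+N_L(h-2)$, the latter justified directly from the root decomposition since the root of $T_h$ is not a leaf for $h\ge 1$; substituting the hypothesis at heights $h-1$ and $h-2$ does indeed reduce the claim to the size recurrence at height $h-1$, so the step closes as you say. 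Your route is more self-contained in one respect: you get the leaf recurrence (which is the substance of the paper's Lemma 2) for free from the decomposition, whereas the paper derives it from the Lemma 1 construction. What the paper's approach buys is the explicit ``grow $T_h$ from $T_{h-1}$ by adding leaves'' picture, which the introduction advertises as a structural property in its own right and which the proof of Lemma 2 reuses; your algebraic induction does not produce that picture, though your secondary leaf-stripping formulation is essentially the paper's construction read backwards and does recover it. Two bookkeeping points need fixing before your write-up matches the paper: the paper's convention is that a single node has height $0$, so $T_0$ is one leaf and $T_1$ has two nodes, making the base case $N(1)=2=1+1=N(0)+N_L(1)$ rather than your $N(1)=1=0+1$; and since your inductive step invokes the hypothesis at both $h-1$ and $h-2$, you need $h=2$ as a second explicit base case (or a stated convention for $N(-1)=0$ and the identity at $h=0$), not merely the remark about the root of $T_2$ not being a leaf.
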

\begin{proof}
The proof relies on the construction of $T_{h}$ from $T_{h-1}$.
Using the recursive view of $T_{h-1}$, it follows that we must increase
the height of the subtrees of $T_{h-1}$ that contains a leaf as a
child. These are the $T_{0}$ and $T_{1}$ subtrees. A $T_{1}$ subtree
is any height 1 subtree while $T_{0}$ is any height 0 subtree that
is not a subtree of some $T_{1}$. Note that $T_{0}$ contains only
a leaf while $T_{1}$ contains an internal node with one leaf. Increasing
the height of any $T_{0}$ effectively replaces its leaf with a $T_{1}^{'}$
subtree. For the case of a $T_{1}$ subtree, its leaf is replaced
by two subtrees -- $T_{0}^{'}$ and $T_{1}^{'}$. Essentially, the
process replaces all the leaves of $T_{h-1}$ with internal nodes
from all the $T_{1}^{'}$ subtrees then introduces new leaves from
the $T_{0}^{'}$ and $T_{1}^{'}$ subtrees. In effect, we can form
a bijection between the nodes of $T_{h-1}$ and the internal nodes
of $T_{h}$. Therefore, adding the number leaves of $T_{h}$ and the
size of $T_{h-1}$ gives the size of $T_{h}$.
\end{proof}
Lemma 1, when applied repeatedly, suggests that an AVL tree $T_{h}$
can be constructed incrementally beginning with some smaller tree.
We now show that the number of leaves of $T_{h}$ is strongly related
to the Fibonacci numbers,
\begin{lem}
Let $F(i)$ be the $i^{th}$ Fibonacci number, then $N_{L}(h)=F(h)$,
for $h\ge0$.
\end{lem}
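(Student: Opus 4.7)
The plan is to proceed by strong induction on $h$, using the recursive structure of $T_h$ already set up in the excerpt: $T_h$ is a root with subtrees $T_{h-1}$ and $T_{h-2}$ (for $h \ge 2$). The crucial structural observation is that, because the root of $T_h$ has two descendants (for $h \ge 2$) or one descendant (for $h=1$), the root is never itself a leaf, and hence the leaf set of $T_h$ is exactly the disjoint union of the leaf sets of its subtrees. This immediately gives the recurrence $N_L(h) = N_L(h-1) + N_L(h-2)$, which mirrors the Fibonacci recurrence.

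First I would settle the base cases explicitly. For $h=0$, the tree $T_0$ consists of a single node, which is itself a leaf, so $N_L(0) = 1 = F(0)$ under the indexing used in the paper. For $h=1$, the tree $T_1$ has a root and a single leaf child, so $N_L(1) = 1 = F(1)$. These two cases anchor the induction and match the standard Fibonacci values.

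For the inductive step, I would fix $h \ge 2$ and assume $N_L(k) = F(k)$ for all $0 \le k < h$. Since $T_h$ decomposes as a root joined to subtrees $T_{h-1}$ and $T_{h-2}$, and since the root contributes no leaves, the leaves of $T_h$ are precisely those of $T_{h-1}$ together with those of $T_{h-2}$. Therefore
\[
N_L(h) = N_L(h-1) + N_L(h-2) = F(h-1) + F(h-2) = F(h),
\]
where the middle equality uses the induction hypothesis and the last is the Fibonacci recurrence.

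There is no real obstacle in the argument; the only subtlety is making sure the recursive decomposition of $T_h$ into $T_{h-1}$ and $T_{h-2}$ truly exhausts all leaves and introduces no new ones, which follows from the fact that the root of $T_h$ has children for every $h \ge 1$ and is thus never a leaf itself. With this observation in hand, the result reduces to a direct matching of the recurrences and base values for $N_L$ and $F$.
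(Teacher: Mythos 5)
Your proof is correct, but it takes a genuinely different route from the paper's. You derive the recurrence $N_{L}(h)=N_{L}(h-1)+N_{L}(h-2)$ directly from the static recursive decomposition of $T_{h}$ into a root with subtrees $T_{h-1}$ and $T_{h-2}$, observing that the root is never a leaf for $h\ge1$, so the leaf set of $T_{h}$ is the disjoint union of the leaf sets of the two subtrees. The paper instead derives the same recurrence from the \emph{dynamic} leaf-replacement process in its proof of Lemma 1: it constructs $T_{h}$ from $T_{h-1}$ by replacing each leaf, and then exhibits two bijections --- one between the leaves of $T_{h-1}$ and the new $x'$ leaves, and one between the roots of the $T_{1}$ subtrees (which correspond to leaves of $T_{h-2}$) and the new $y'$ leaves. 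Your argument is shorter, more elementary, and independent of Lemma 1; the paper's argument is deliberately tied to its incremental-construction viewpoint (grouping nodes of $T_{h}$ by the stage at which they were added as leaves), which is one of the advertised contributions and feeds directly into the telescoping sum in Corollary 2.1. Both are valid; the only point to be careful about in yours is the Fibonacci indexing --- the paper's convention must be $F(0)=F(1)=1$ for the base cases $N_{L}(0)=F(0)$ and $N_{L}(1)=F(1)$ to hold, which you correctly flagged.
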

\begin{proof}
We show that $N_{L}(h)$ follows the Fibonacci recurrence. For $h\le1$,
direct construction of $T_{0}$ and $T_{1}$ shows that $N_{L}(0)=F(0)$
and $N_{L}(1)=F(1)$. We now show that for $h>1$, $N_{L}(h)=N_{L}(h-1)+N_{L}(h-2)$.
From the proof of Lemma 1, the leaves of $T_{h}$ are the leaves created
after transforming all $T_{0}$ and $T_{1}$ subtrees of $T_{h-1}$.
Let $x_{0}^{'}$ be the leaf added to $T_{h}$ after replacing the
leaf of $T_{0}$ with a $T_{1}^{'}$ subtree. Let $x_{1}^{'}$ and
$y_{1}^{'}$ be the leaves added to $T_{h}$ after replacing the leaf
of $T_{1}$ by $T_{0}^{'}$ and $T_{1}^{'}$ subtrees. Observe that
we can form a bijection between the leaves of $T_{h-1}$ and the $x'$
leaves of $T_{h}$. Therefore, counting all the $x'$ gives $N_{L}(h-1)$.
Also, a bijection can be formed between the roots of each $T_{1}$
and the $y'$ leaves of $T_{h}$. These roots correspond to the leaves
of $T_{h-2}$ since all height 0 nodes (leaves) in $T_{h-2}$ becomes
height 1 nodes (roots of $T_{1}$ subtrees) in $T_{h-1}$. Thus, counting
all $y'$ gives $N_{L}(h-2)$. Since the leaves of $T_{h}$ are the
$x^{'}$ and $y^{'}$ combined, $N_{L}(h)=N_{L}(h-1)+N_{L}(h-2)$
and the statement of the lemma follows.
\end{proof}
We can now easily prove the bound on the size of $T_{h}$,
\begin{cor}
$N(h)=F(h+2)-1$.
\end{cor}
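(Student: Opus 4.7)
The plan is to combine the two lemmas by a simple telescoping argument and then invoke the classical Fibonacci summation identity. Specifically, Lemma 1 gives a one-step recurrence $N(h) = N(h-1) + N_L(h)$, and iterating this down to the base case unrolls $N(h)$ into a sum of leaf-counts. Lemma 2 then converts each leaf-count into a Fibonacci number, reducing the corollary to a statement purely about Fibonacci numbers.

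Concretely, I would first observe that $T_0$ consists of a single leaf, so $N(0) = N_L(0) = F(0) = 1$. Then I would apply Lemma 1 repeatedly to obtain
\[
N(h) \;=\; N(0) + \sum_{i=1}^{h} N_L(i) \;=\; \sum_{i=0}^{h} N_L(i).
\]
Substituting Lemma 2 termwise, this becomes $N(h) = \sum_{i=0}^{h} F(i)$. This matches the informal sentence in the introduction that $N(h)$ is the sum of the first $h$ Fibonacci numbers once the node grouping from Lemma 1 is made explicit.

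The final step is to invoke the standard identity $\sum_{i=0}^{h} F(i) = F(h+2) - 1$, valid under the paper's indexing convention $F(0) = F(1) = 1$ established in Lemma 2. I would either cite this identity as standard or dispatch it quickly by induction: the base case $h = 0$ gives $F(0) = 1 = F(2) - 1$, and the inductive step uses $F(h+2) = F(h+1) + F(h)$ directly. Combining, $N(h) = F(h+2) - 1$, as claimed.

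There is no real obstacle here; the only mild subtlety is bookkeeping around the indexing of $F$ (the paper's Lemma 2 effectively pins down $F(0) = F(1) = 1$, which is a shift from one common convention), and making sure the telescoping starts at $N(0) = 1$ rather than at some conventional $N(-1)$ or $N_L(-1)$. Once these are fixed, the proof is a two-line deduction from Lemmas 1 and 2 plus the Fibonacci sum identity.
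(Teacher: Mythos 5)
Your proposal is correct and follows essentially the same route as the paper: telescope Lemma 1 to write $N(h)=N(0)+\sum_{i=1}^{h}N_{L}(i)$, fold in the base case $N(0)=N_{L}(0)=F(0)$, apply Lemma 2 termwise, and finish with the identity $\sum_{i=0}^{h}F(i)=F(h+2)-1$. Your extra care about the indexing convention and offering an inductive proof of the summation identity is sound but not a different argument.
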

\begin{proof}
Applying Lemma 1 and Lemma 2 repeatedly we have $N(h)=N(0)+\sum_{i=1}^{h}N_{L}(i)=N(0)+\sum_{i=1}^{h}F(i)$.
Since $N(0)=N_{L}(0)=F(0)$, $N(h)=\sum_{i=0}^{h}F(i)$. The claim
follows given that $\sum_{i=0}^{h}F(i)=F(i+2)-1$.
\end{proof}
\begin{thm}
AVL tree has logarithmic height.
\end{thm}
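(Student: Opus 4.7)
The plan is to combine the Corollary with a standard exponential lower bound on Fibonacci numbers. I would begin by noting that for any AVL tree with $n$ nodes and height $h$, the defining minimality of $T_h$ forces $n \ge N(h)$. The Corollary then upgrades this to $n + 1 \ge F(h+2)$, reducing the theorem to showing that this Fibonacci inequality implies $h = O(\log n)$.

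The key estimate I would invoke is $F(i) \ge \phi^{i-2}$ for $i \ge 2$, where $\phi = (1+\sqrt{5})/2$ is the golden ratio already referenced in the introduction. A short induction using the Fibonacci recurrence together with the identity $\phi^2 = \phi + 1$ establishes this bound. Substituting $i = h+2$ gives $n + 1 \ge \phi^h$, so $h \le \log_\phi(n+1)$ and therefore $h = O(\log n)$.

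Because all the structural work has been absorbed into Lemmas 1--2 and the Corollary, I do not anticipate a substantive obstacle in this final step. The only point warranting care is the indexing convention for the Fibonacci sequence, which shifts the constant in front of $\log n$ but not the asymptotic conclusion; matching the convention used to prove Lemma 2 keeps the bound self-consistent with the rest of the section.
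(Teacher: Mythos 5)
Your proposal is correct and follows essentially the same route as the paper: it invokes Corollary 2.1 to get $n+1 \ge F(h+2)$, applies the bound $F(i) \ge \phi^{i-2}$, and concludes $h \le \log_\phi(n+1)$. If anything, your version is slightly more careful in making explicit the step $n \ge N(h)$ for an arbitrary AVL tree, which the paper compresses into ``letting $n = N(h)$.''
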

\begin{proof}
From Corollary 2.1, $N(h)=F(h+2)-1$. Since $F(i)\ge\phi^{(i-2)}$
we have $N(h)\ge\phi^{h}-1$. Taking the logarithm of both sides and
letting $n=N(h)$, it follows that the height of an AVL tree is at
most $\log_{\phi}(n+1)$.
\end{proof}

\section{Analysis of Build-Heap}

In our analysis, we shall treat the array as a forest of $n$ trees.
That is, each element of the array is a root of a distinct tree. Since
each tree contains only one node, they can be considered as heaps.
A sift-down can then be interpreted as merging of heaps to produce
one larger heap. Thus, the build-heap algorithm is simply a sequence
of merges to convert the entire forest into a single heap. 

Let the \textit{level} of a heap be equal to the height of the heap
plus one. We let the level of an empty heap be 0. Under this definition,
the forest initially contains $n$ heaps with level 1. Further, a
sift-down merges one heap of level 1, called the \textit{parent heap},
with two heaps with level at most $l$, called \textit{child heaps},
to produce a child heap of level $l+1$. Initially, the parent heaps
are those elements belonging to the upper-half of the array, while
the remaining half are the child heaps. 

We now prove the running-time of build-heap using the potential method:
\begin{thm}
The worst-case running-time of build-heap is O(n).
\end{thm}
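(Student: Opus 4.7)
The plan is to apply the potential method with the potential $\Phi$ defined as the sum of the levels of all heaps currently in the forest. Initially the forest consists of $n$ singleton heaps of level $1$, so $\Phi_0 = n$, while at termination there is a single heap of positive level, so $\Phi_{\text{final}} \ge 1$. It suffices to show that each sift-down has amortized cost at most $1$: since the algorithm performs at most $\lfloor n/2 \rfloor$ sift-downs, the total amortized cost is at most $n/2$, which by the standard telescoping identity gives total actual work at most $n/2 + \Phi_0 - \Phi_{\text{final}} \le 3n/2 = O(n)$.

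Consider an arbitrary sift-down. By the heap-merging interpretation it takes a level-$1$ parent heap together with two child heaps of levels $l_1 \le l_2$ and fuses them into a single child heap of level $l_2 + 1$. The potential change is $(l_2+1) - 1 - l_1 - l_2 = -l_1$, so the potential drops by exactly $l_1$. The actual running time is at most $l_2$, since the element being sifted traverses at most the height $l_2$ of the resulting heap. Hence the amortized cost of this sift-down is bounded by $l_2 - l_1$.

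It then remains to bound $l_2 - l_1$ by $1$, which is a structural property of the array-as-tree layout. Because the input array is interpreted as a nearly-complete binary tree in level order, the two subtrees at any internal node have heights differing by at most $1$, and therefore so do their levels. The degenerate case in which a right child is absent is handled by assigning that empty ``heap'' level $0$: then its sibling is forced to be a single leaf of level $1$, so $l_2 - l_1 = 1$ still. Thus $l_2 - l_1 \le 1$ in every case, and the amortized cost per sift-down is at most $1$, as required.

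The main obstacle I anticipate is presenting the structural invariant $|l_1 - l_2| \le 1$ cleanly, since it does not come from the potential setup itself but from the specific array-indexing convention. One can dispatch it by a short induction on the index of the internal node, using that the children of index $i$ live at $2i$ and $2i+1$ and that the last level is filled left-to-right. Once the invariant is in hand, everything else is bookkeeping with the telescoping identity that underlies every potential-method argument.
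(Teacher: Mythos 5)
Your proposal is correct and takes essentially the same approach as the paper: the same level-sum potential with $\Phi_0=n$, the same heap-merging interpretation of sift-down, the same amortized cost of at most $1$ per merge, and the same telescoping argument yielding roughly $3n/2$ actual work. The only difference is presentational: you collapse the paper's two cases into the single bound $l_2-l_1$ and explicitly flag the structural invariant $l_2-l_1\le 1$ coming from the array layout, which the paper assumes implicitly by enumerating only the cases $\{l,l-1\}$ and $\{l,l\}$.
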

\begin{proof}
Let the potential be the total levels of all heaps in the forest.
Since the forest initially contains $n$ heaps of level 1, $\Phi_{0}=n$.
After merging all the heaps, the height of the final heap is $\log_{2}n$,
thus $\Phi_{m}=\log_{2}n+1$. During a sift-down, if the child heaps
have levels $l$ and $l-1$, then in the worst case, the potential
will decrease by $2l$ and then increase by $l+1$. This is so, since
the sift-down will remove three heaps, which has a total level of
$2l$, and then replace them with a heap of level $l+1$. For the
case where the child heaps both have $l$ levels, the decrease in
potential is $2l+1$. 

The actual cost, $a_{i}$, of the $i^{th}$ sift-down operation is
at most $l$ since in the worst case, the root of the parent heap
will be compared to at most $l$ nodes from one of the child heaps.
The amortized cost, $\widehat{a_{i}}$, of a sift-down at time $i$
is:\\

\textit{Case 1}: Child heaps have different levels: $\mathit{\widehat{a_{i}}=a_{i}+\Phi_{i}-\Phi_{i-1}=l-2l+(l+1)=1}$

\textit{Case 2}: Child heaps have same level: $\mathit{\widehat{a_{i}}=a_{i}+\Phi_{i}-\Phi_{i-1}=l-(2l+1)+(l+1)=0}$\\

Since there are only at most $n/2$ parent heaps, the number of sift-down
opeartions, $m$, is at most $n/2$. The amortized cost of a sift-down
is at most 1 therefore, the total amortized cost of the sequence of
sift-down is $n/2$. The worst-case running time of the bottom-up
heap construction is the total actual cost: 

\begin{align*}
\sum_{i=0}^{m}a_{i} & =\sum_{i=0}^{m}\widehat{a_{i}}+\sum_{i=1}^{m}(\Phi_{i-1}-\Phi_{i})\\
 & \le n/2+\Phi_{0}-\Phi_{n/2}\\
 & =n/2+n-\log_{2}n-1\\
 & \le n/2+n\\
 & =1.5n\\
 & =O(n)
\end{align*}
\end{proof}
\printbibliography

\end{document}